\newtheorem{thm}{Theorem}
\newtheorem{cor}{Corollary}
\theoremstyle{definition}
\newtheorem{defn}{Definition}
\newtheorem{example}{Example}
\newtheorem{remark}{Remark}
\newcommand{\PR}{\mbox{Pr}}
\newcommand{\bS}{\mathbf{S}}
\newcommand{\bs}{\mathbf{s}}
\newcommand{\calU}{\mathcal{U}}
\newcommand{\calY}{\mathcal{Y}}
\newcommand{\calS}{\mathcal{S}}
\title{Privacy Against Statistical Inference}
\author{\IEEEauthorblockN{Fl\'avio du Pin Calmon}
\IEEEauthorblockA{
Massachusetts Institute of Technology\\
Cambridge, MA 02139\\
Email: flavio@mit.edu}
\and
\IEEEauthorblockN{Nadia Fawaz}
\IEEEauthorblockA{Technicolor\\
Palo Alto, CA 94301\\
Email: nadia.fawaz@technicolor.com}}
\begin{document}

\maketitle

\begin{abstract}
We propose a general statistical inference framework to capture the privacy
threat incurred by a user that releases data to a passive but curious adversary,
given utility constraints. We show that applying this general framework to the
setting where the adversary uses the self-information cost function naturally
leads to a non-asymptotic information-theoretic approach for characterizing the
best achievable privacy subject to utility constraints. Based on these results
we introduce two privacy metrics, namely average information leakage and maximum
information leakage. We prove that under both metrics the resulting design
problem of finding the optimal mapping from the user's data to a
privacy-preserving output can be cast as a modified rate-distortion problem
which, in turn, can be formulated as a convex program. Finally, we compare our
framework with differential privacy.
\end{abstract}


\section{Introduction}

\subsection{Motivation}
Increasing volumes of user data are being collected over wired and wireless
networks, by a large number of companies who mine this data to provide
personalized services or targeted advertising to users. As a consequence,
privacy is gaining ground as a major topic in the social, legal, and business
realms. This trend has spurred recent research in the area of theoretical models
for privacy, and their application to the design of privacy-preserving services.
Most privacy-preserving techniques, such as anonymization, k-anonymity
\cite{Sweeney-2002} and differential privacy \cite{Dwork-McSherry-2006}, are
based on some form of perturbation of the data, either before or after the data
is used in some computation. These perturbation techniques provide privacy
guarantees at the expense of a loss of accuracy in the computation result, which
leads to a privacy-accuracy trade-off.

In this paper, we consider the general setting where a user wishes to release a
set of measurements to an analyst who provides a  service (e.g. a
recommendation system), while keeping data that are correlated with
these measurements private.  On one hand, the analyst is a legitimate receiver for these
measurements, from which he expects to derive some utility. On the other hand,
the correlation of these measurements with the user's private data gives the
analyst the ability to illegitimately infer private information. The tension
between the privacy requirements of the user and the utility expectations of the
analyst gives rise to the problems of privacy-utility trade-off modeling, and
the design of release schemes minimizing the privacy risks incurred by the user,
while satisfying the utility constraints of the analyst.

\subsection{Contributions}

%
%
%
%
%
%
%

Our contributions are three-fold.  First, we propose a general statistical
inference framework to capture the privacy threat incurred by a user who
releases information given certain utility constraints. The privacy risk  is
modeled as an inference cost gain by a passive but curious adversary upon
observing the information released by the user. In broad terms, this cost gain
represents the ``amount of knowledge'' learned by an adversary about the private
data after observing the user's output. The design problem of finding the
optimal mapping from the user's information to a privacy-preserving output is
formulated as an optimization problem where the cost gain of the adversary is
minimized for a given set of utility constraints. This formulation is general
and given in terms of minimizing both the average and the maximum cost gain,
being applicable to different cost functions.

Second, we apply this general framework to the case when the adversary uses the
self-information cost function. We show how this naturally leads to a
non-asymptotic  information-theoretic framework  to characterize the information
leakage subject to utility constraints. Based on these results we introduce two
privacy metrics, namely \textit{average information leakage} and \textit{maximum
information leakage}. We also demonstrate that the problem of designing a
privacy preserving mechanism that achieves the optimal privacy-accuracy tradeoff
both for the average and maximum information leakage can be cast as modified
rate-distortion problems. We then prove that these problems, in turn, can be
expressed as  convex programs. As a consequence, the privacy preserving mapping
that achieves the optimal privacy-utility tradeoff can be efficiently found
using convex minimization algorithms or widely available convex solvers.

Finally, we compare the average information leakage and maximum information
leakage metrics with differential privacy. We show that differential privacy
does not provide in general \textit{any} privacy guarantees in terms of average
or maximum information leakage. Furthermore, we introduce the definition of
\textit{information privacy}, and prove that information privacy implies both
differential privacy and privacy in terms of (average or maximum) information
leakage.

\subsection{Related Work}

In the privacy research community, a prevalent and strong notion of privacy is
that of differential privacy \cite{Dwork-McSherry-2006,dwork_differential_2006}.
Differential privacy bounds the variation of the distribution of the released
output given the input database, when the input database varies slightly, e.g.
by a single entry. Intuitively, released outputs satisfying differential privacy
render the distinction between "neighboring" databases difficult. 
distinguish between.  However, differential privacy neither provides guarantees,
nor an intuition, on the amount of information leaked when a differentially
private release occurs. Moreover, user data usually presents correlations.
Differential privacy does not factor in correlations in  user data, as the
distribution of user data is not taken into account in this model.  A natural
question is how the notion of privacy proposed in this paper compares to that of
differential privacy. We cover this question in more details in
Section~\ref{sec:compare}.

Several approaches rely on information-theoretic tools to model privacy-accuracy
trade-offs, such as
\cite{Reed-1973,Yamamoto-ITtrans1983,evfimievski_limiting_2003,Sankar-Poor-arxiv2010}.
Indeed, information theory, and more specifically rate-distortion theory, appear
as natural frameworks to analyze the privacy-accuracy trade-off resulting from
the distortion of correlated data. Although the approach we introduce in this
paper involves information theoretic metrics, it is fundamentally different from
previous information theoretic privacy models. Indeed, traditional information
theoretic privacy models, such as
\cite{Yamamoto-ITtrans1983,Sankar-Poor-arxiv2010}, focus on collective privacy
for all or subsets of the entries of a database, and provide asymptotic
guarantees on the average remaining uncertainty per database entry -- or
equivocation per input variable -- after the output release. More precisely, the
average equivocation per entry is modeled as the conditional entropy of the
input variables given the released output, normalized by the number of input
variables. In contrast, the general framework introduced in this paper provides
privacy guarantees in terms of bounds on the inference cost gain that an
adversary achieves by observing the released output. The use of a
self-information cost yields a non-asymptotic information theoretic framework
modeling the privacy risk in terms of information leakage. This framework, in
turn, can be used to design practical privacy preserving mappings.  Finally, we
would like to point out that the formulation in \cite{Reed-1973}, differs from
previously mentioned information theoretic models, and addresses a particular
case of the general framework introduced in this paper.

The paper is organized as follows. We describe the set-up and the threat model
in Section~\ref{sec:genSet}, and  formulate the privacy-accuracy trade-off in
Section~\ref{sec:Challenge}. Our main results and their proofs are presented in
Section~\ref{sec:Results}. Finally, in Section~\ref{sec:compare} we draw a
comparison between the privacy notion proposed in this paper, and other existing
privacy models, leading to the concluding remarks in
Section~\ref{sec:Conclusion}.

\section{General Setup and threat model}
\label{sec:genSet}

In this section we outline the general  setup  considered in this paper and  the corresponding   threat model.

\subsection{General setup}

 We assume that there are two parties that communicate over a noiseless channel,
 namely Alice and Bob. Alice has access to a set of measurement points,
 represented by the variable $Y\in \mathcal{Y}$, that she wishes to transmit to
 Bob. At the same time, Alice requires that  a set of variables $S\in \calS$
 should remain private, where $S$ is jointly distributed with $Y$ according to
 the distribution $(Y,S)\sim p_{Y,S}(y,s)$, $(y,s)\in \mathcal{Y}\times
 \mathcal{S}$. Depending on the considered setting, the variable $S$ can be
 either directly accessible to Alice or inferred from  $Y$. If no privacy
 mechanism was in place, Alice would simply transmit $Y$ to Bob.

Bob has a utility requirement for the information sent by Alice. Furthermore,
Bob is honest but curious, and will try to learn $S$ from Alice's transmission.
Alice's goal is to find and transmit a distorted version of $Y$, denoted by
$U\in \mathcal{U}$, such that $U$ satisfies a target utility constraint  for
Bob, but ``protects'' (in a sense made more precise later) the private variable
$S$. We assume that Bob is passive but computationally unbounded, and will try
to infer $S$ based on $U$.

We consider, without loss of generality, that $S\rightarrow Y \rightarrow U$.
Note that this model can capture the case where $S$ is directly accessible by
Alice by appropriately adjusting the alphabet $\calY$. For example, this can be
done by representing $S\rightarrow Y$ as an injective mapping or allowing $\calS
\subset \calY$. In other words, even though the privacy mechanism is designed as
a mapping from $\calY$ to $\calU$, it is not limited to an output perturbation,
and it encompasses input perturbation settings.

\begin{defn} A privacy preserving mapping is a probabilistic mapping
  $g:\mathcal{Y}\rightarrow \mathcal{U}$ characterized by a transition
  probability $p_{U|Y}(u|y),~y\in \mathcal{Y},~u\in\mathcal{U}$.  \end{defn}
Since the framework developed here results in formulations that are similar to
the ones found in rate-distortion theory, we will use the term distortion to
indicate a measure of utility. Furthermore, we will use the terms utility and
accuracy interchangeably throughout the paper.  \begin{defn} Let $d:
  \mathcal{Y}\times \mathcal{U}\rightarrow \mathbb{R}^+$ be a given distortion
  metric. We say that a privacy preserving mapping has distortion $\Delta$ if
  $\mathbb{E}_{Y,U}[d(Y,U)]\leq \Delta$.  \end{defn}

We make the following assumptions: 
\begin{enumerate} 
    \item Alice and Bob know the prior distribution of $p_{Y,S}(\cdot)$. This
      represents the side information that an adversary has.  
  
    \item Bob has complete knowledge of the privacy preserving mapping, i.e.,
      $g$ and $p_{U|Y}(\cdot)$ are known.       
\end{enumerate}

    Note that this represents the
      \textit{worst-case} statistical side information that an adversary can
      have about the input.  
  \subsection{Threat model} We assume that Bob selects a revised distribution
  $q\in \mathcal{P}_{S}$, where $\mathcal{P}_{S}$ is the set of all probability
  distributions over $\calS$, in order to minimize an expected cost $C(S,q)$. In
  other words, the adversary chooses  $q$ as the solution of the minimization
  \begin{equation} c_0^*=\min_{q\in \mathcal{P}_{S}} \mathbb{E}_{S}[C(S,q)]
  \end{equation} prior to  observing $U$, and \begin{equation} c_u^*=\min_{q\in
    \mathcal{P}_{S}} \mathbb{E}_{S|U}[C(S,q)|U=u] \end{equation} after observing
    the  output $U$. Note that this restriction on Bob models a very broad class
    of adversaries that perform statistical inference, capturing how an
    adversary acts in order to infer a revised belief distribution over the
    private variables $S$ when observing $U$. After choosing this distribution,
    the adversary can perform an estimate of the input distribution (e.g. using
    a MAP estimator). However, the quality of the inference is inherently tied
    to the revised distribution $q$.

The average cost gain by an adversary after observing the output is
\begin{equation} \Delta C= c_0^*-\mathbb{E}_U[c_u^*].  \end{equation} The
  maximum cost gain by an adversary is measured in terms of the most informative
  output (i.e. the output that give the largest gain in cost), given by
  \begin{equation} \Delta C^*= c_0^*-\min_{u\in \mathcal{U}} c_u^*.
    \label{eq:maxcostgen} \end{equation}

In the next section we present a formulation for the privacy-accuracy tradeoff
based on this general setting.

\section{A general formulation for the privacy-accuracy tradeoff}\label{sec:Challenge}

\subsection{ The privacy-accuracy tradeoff as an optimization problem}

Our goal is to design privacy preserving mappings that minimize $\Delta C$ or
$\Delta C^*$ for a given distortion level $\Delta$, characterizing the
fundamental privacy-utility tradeoff. More precisely, our focus is to solve
optimization problems  over $p_{U|Y}\in\mathcal{P}_{{U|Y}}$ of the form
\begin{align}
  &\min~\Delta C \mbox{ or }\Delta C^* \label{eq:minDeltaC} \\
  &\mbox{s. t.~ } \mathbb{E}_{Y,U}[d(Y,U)]\leq \Delta~,
\end{align}
 where $\mathcal{P}_{{U|Y}}$ is the set of all conditional  probability
 distributions of $U$ given $Y$.

\begin{remark} In the remainder of the paper we consider only one distortion
  constraint. However, it is straightforward to generalize the formulation and
  the subsequent optimization problems to  multiple distinct distortion
  constraints $ \mathbb{E}_{Y,U}[d_1(Y,U)]\leq
  \Delta_1,\dots,\mathbb{E}_{Y,U}[d_n(Y,U)]\leq \Delta_n$. This can be done by
  simply adding an additional linear constraint to the convex program.
\end{remark}

\subsection{Application examples}

We illustrate next how the proposed model can be cast in terms of privacy
preserving queries and hiding features within data sets.

\subsubsection{Privacy-preserving queries to a database}

The framework described above can be applied to database privacy problems, such
as those considered in differential privacy. In this case we denote the private
variable as a vector $\bS=S_1,\dots, S_n$, where $S_j \in \mathcal{S}$, $1\leq j
\leq n$ and $S_1,\dots,S_n$ are discrete entries of a database that represent,
for example, the  entries of $n$ users. A (not necessarily
deterministic) function $f:\mathcal{S}^n \rightarrow \mathcal{Y}$ is calculated
over the database with output $Y$ such that $Y=f(S_1,\dots,S_n)$. The goal of
the privacy preserving mapping is to present a query output $U$ such that the
individual entries $S_1,\dots,S_n$ are ``hidden'', i.e. the estimation cost gain
of an adversary is minimized according to the previous discussion, while still
preserving the utility of the query in terms of the target distortion
constraint. We illustrate this case with the counting query, which will be a
recurring example throughout the rest of this paper.

\begin{example}[Counting query]  Let $S_1,\dots,S_n$ be entries in a database,
  and define: \label{examp:query1}
\begin{equation}
\label{eq:count_query}
Y = f(S_1,\dots,S_n)=\sum_{i=1}^{n} \mathbbm{1}_A (S_i),
\end{equation}
where
\begin{equation*}
\mathbbm{1}_A(x) = \left\{
     \begin{array}{ll}
       1& \mbox{if~} x~\mbox{has property $A$,}\\
       0 & \mbox{otherwise.}
     \end{array}
   \right.
\end{equation*}
In this case there are two possible approaches: (i) output perturbation, where
$Y$ is distorted directly to produce $U$, and (ii) input perturbation, where
each individual entry $S_i$ is distorted directly, resulting in a new query
output $U$.
\end{example}

\subsubsection{Hiding dataset features}  Another important particularization  of
the proposed framework is   the obfuscation of  a set of features $S$ by
distorting the entries of a data set $Y$. In this case $|\calS|\ll |\calY|$,
and $S$ represents a set of features that might be inferred from the data $Y$,
such as age group or salary. The distortion can be defined according to the the
utility of a given statistical learning algorithm (e.g. a recommendation system)
used by Bob.

\section{Privacy-accuracy tradeoff results}\label{sec:Results}

The formulation introduced in the previous section is general and can be applied
to different cost functions. In this section we particularize the formulation to
the case where the adversary uses the self-information cost function, as
discussed below.

\subsection{The self-information cost function}
The \textit{self information} (or
\textit{log-loss}) cost function is given by
\begin{equation}
C(S,q)=-\log q(S).
\end{equation}
There are several motivations for using such a cost function. For an overview of
the central role of the self-information cost function  in prediction, we refer
the reader to \cite{merhav_universal_1998}.  
Briefly, the self-information cost function is the only local, proper and smooth cost
function for an alphabet of size at least three. Furthermore, since the minimum
self-information loss probability assignments are essentially ML estimates, this
cost function is consistent with a ``rational'' adversary. In addition, the average cost-gain when using the self-information cost can be related to the cost gain when using any other bounded cost function  \cite{merhav_universal_1998}. Finally, as we will
see below, this minimization implies a ``closeness'' constraint between the
prior and a posteriori probability distributions in terms of KL-divergence. In Section \ref{sec:compare} we compare the resulting privacy measure with that of differential privacy and \textit{information-privacy}.

In the next sections we show how the cost minimization problems in
\eqref{eq:minDeltaC} used with the self-information cost function can be cast as
convex programs and, therefore, can be efficiently solved using interior point
methods or widely available convex solvers.

\subsection{Average information leakage}
It is straightforward to show that for the log-loss function $c_0^*=H(S)$ and, consequently,
$c^*_u=H(S|U=u)$, and, therefore
\begin{align}
  \Delta C=I(S;U)&=\mathbb{E}_U[D(p_{S|U}||p_{S})],
\end{align}
where $D(\cdot||\cdot)$ is the KL-divergence. The minimization
\eqref{eq:minDeltaC} can the be rewritten according to the following definition.

\begin{defn}
 The \textit{average information leakage} of a set of features $S$ given a
 privacy preserving output $U$ is given by $I(S;U)$. A privacy-preserving mapping
 $p_{U|Y}(\cdot)$ is said to provide the \textit{minimum average information
 leakage} for a distortion constraint $\Delta$  if it is the solution of the
 minimization
 \begin{align}
  \min_{p_{U|Y}}~&I(S;U) \label{eq:minAvgI}\\
  \mbox{s.t.~ }& \mathbb{E}_{Y,U}[d(Y,U)]\leq \Delta~.
  \end{align}
\end{defn}
Observe that finding the mapping $p_{U|Y}(u|y)$ that provides the minimum information leakage  is a  modified rate-distortion problem. Alternatively, we can
rewrite this optimization as
\begin{align}
  \min_{p_{U|Y}}~&\mathbb{E}_U[D(p_{S|U}||p_{S})]  \label{eq:minD} \\
  \mbox{s.t.~ } &\mathbb{E}_{Y,U}[d(Y,U)]\leq \Delta~.
\end{align}

The minimization \eqref{eq:minD} has an  interesting and intuitive
interpretation. If we consider KL-divergence as a metric for the distance
between two distributions,  \eqref{eq:minD} states that the revised distribution
after observing $U$ should be as close as possible to the a priori distribution in terms of KL-divergence.

The following theorem shows how the the optimization in the previous definition can be expressed as a convex optimization problem. We note that this optimization is solved in terms of the unknowns $p_{U|Y}(\cdot|\cdot)$ and $p_{U|S}(\cdot|\cdot)$, which are coupled together through a linear equality constraint.

\begin{thm}
\label{prop:avgInf}
Given $p_{S,Y}(\cdot,\cdot)$, a distortion function $d(\cdot,\cdot)$ and a distortion constraint $\Delta$, the  mapping  $p_{U|Y}(\cdot|\cdot)$ that minimizes the average information leakage  can be found by solving the following convex optimization (assuming the usual simplex constraints on the probability distributions):
\begin{align} 
\min_{p_{U|Y},p_{U|S}}& \sum_{u\in \calU}\sum_{ s \in \calS} p_{U|S}(u|s)p_S(s) \log \frac{ p_{U|S}(u|s)}{p_U(u)} \label{eq:obj_avg}\\
\mbox{\normalfont s.t.~ }& \sum_{u\in\calU} \sum_{ y \in \calY} p_{U|Y}(u|y)p_Y(y)d(u,y)\leq \Delta,\\
& \sum_{y \in \calY} p_{Y|S}(y|s)p_{U|Y}(u|y) = p_{U|S}{(u|s)} ~\forall u,s, \label{eq:constr_avg}\\
&  \sum_{s \in \calS} p_{U|S}(u|s)p_{S}(s) = p_{U}{(u)}~\forall u.
\end{align}
\end{thm}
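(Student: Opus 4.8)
The plan is to show two things: first, that the displayed program is equivalent to the minimization \eqref{eq:minAvgI}, in the sense that it has the same optimal value and returns the same optimal $p_{U|Y}$; and second, that it is a convex program. Throughout I would regard the decision variables as the pair $(p_{U|Y},p_{U|S})$, with $p_U$ understood as the linear image of $p_{U|S}$ fixed by the last constraint. For the equivalence, I would first confirm that the objective \eqref{eq:obj_avg} is exactly $I(S;U)$: writing $p_{S,U}(s,u)=p_{U|S}(u|s)p_S(s)$, the sum $\sum_{u,s}p_{U|S}(u|s)p_S(s)\log\frac{p_{U|S}(u|s)}{p_U(u)}$ is the standard expansion of $I(S;U)=\mathbb{E}_U[D(p_{S|U}||p_S)]$. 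Then I would check that each constraint faithfully encodes the original problem: the distortion constraint follows because, under $S\to Y\to U$, we have $p_{Y,U}(y,u)=p_Y(y)p_{U|Y}(u|y)$, so $\mathbb{E}_{Y,U}[d(Y,U)]=\sum_{u,y}p_{U|Y}(u|y)p_Y(y)d(u,y)$; constraint \eqref{eq:constr_avg} is precisely the Markov relation $p_{U|S}(u|s)=\sum_y p_{U|Y,S}(u|y,s)p_{Y|S}(y|s)=\sum_y p_{U|Y}(u|y)p_{Y|S}(y|s)$, using $p_{U|Y,S}=p_{U|Y}$; and the last constraint is the definition of the marginal $p_U$. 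Hence every feasible $p_{U|Y}$ induces a feasible triple with matching objective value, and conversely constraint \eqref{eq:constr_avg} forces any feasible $p_{U|S}$ to be exactly the channel induced by $p_{U|Y}$, so the two problems share the same optimizer in $p_{U|Y}$.

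Next I would verify that the feasible set is convex. This is the easy part: the distortion constraint is linear in $p_{U|Y}$; constraint \eqref{eq:constr_avg} is linear in $(p_{U|Y},p_{U|S})$ since $p_{Y|S}$ is a fixed constant; the marginal constraint is linear; and the simplex constraints are linear. Thus the feasible region is a polytope, hence convex. The point of introducing the auxiliary variable $p_{U|S}$ is exactly this: it lets all the coupling be expressed through \emph{affine} equalities, so that the only remaining question is the convexity of the objective expressed through $p_{U|S}$, with the source $p_S$ held fixed.

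The main obstacle is precisely this convexity, because the log-ratio term carries the optimization variable in both the numerator ($p_{U|S}$) and the denominator ($p_U$, itself linear in $p_{U|S}$), so convexity is not visible term by term. I would establish it by recognizing the objective as the mutual information $I(S;U)$ as a function of the channel $p_{U|S}$, and invoking the joint convexity of relative entropy (equivalently, the log-sum inequality). Concretely, for two channels $p_{U|S}^{(0)},p_{U|S}^{(1)}$ with induced outputs $p_U^{(0)},p_U^{(1)}$ and $\lambda\in[0,1]$, the convex combination $p_{U|S}^{(\lambda)}$ induces $p_U^{(\lambda)}=\lambda p_U^{(1)}+(1-\lambda)p_U^{(0)}$, and joint convexity gives $D(p_{U|S}^{(\lambda)}(\cdot|s)||p_U^{(\lambda)})\le \lambda D(p_{U|S}^{(1)}(\cdot|s)||p_U^{(1)})+(1-\lambda)D(p_{U|S}^{(0)}(\cdot|s)||p_U^{(0)})$ for each $s$. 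Averaging over $s$ with weights $p_S(s)$ yields $I^{(\lambda)}(S;U)\le \lambda I^{(1)}(S;U)+(1-\lambda)I^{(0)}(S;U)$, the desired convexity. With a convex objective over an affine feasible set, the program is convex, and by the equivalence above its solution recovers the optimal privacy-preserving mapping $p_{U|Y}$.
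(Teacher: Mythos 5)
Your proposal is correct, and on the one substantive step it takes a genuinely different (though closely related) route from the paper. The paper disposes of the equivalence with \eqref{eq:minAvgI} by saying ``Clearly,'' whereas you spell out the three constraint identifications (distortion linear in $p_{U|Y}$, the Markov relation $p_{U|S}(u|s)=\sum_{y\in\calY}p_{Y|S}(y|s)p_{U|Y}(u|y)$ via $p_{U|Y,S}=p_{U|Y}$, and the marginal definition) --- a small but real improvement in rigor. For convexity of the objective, the paper keeps $p_U$ as an explicit variable coupled by the linear equality and argues termwise: $h(x)=x\log x$ is convex, so its perspective $g_1(x,z)=x\log(x/z)$ is jointly convex for $z>0$, and the objective \eqref{eq:obj_avg} is $\sum_{u\in\calU}\sum_{s\in\calS}p_S(s)\,g_1\bigl(p_{U|S}(u|s),p_U(u)\bigr)$, a nonnegative combination of jointly convex functions of the program's variables. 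You instead eliminate $p_U$ as the affine image of $p_{U|S}$ and invoke the joint convexity of relative entropy (the log-sum inequality), i.e., the classical fact that $I(S;U)$ is convex in the channel $p_{U|S}$ for fixed $p_S$; your per-$s$ inequality averaged by $p_S(s)$ is exactly the textbook proof of that fact. These are two packagings of the same underlying convexity (the log-sum inequality is the perspective construction in disguise), so neither is more general: the paper's version has the advantage of exhibiting convexity directly in the variables the solver sees, term by term, while yours makes the structural reason transparent by identifying the objective as mutual information as a function of the channel. One minor point you omit that the paper does address: well-definedness at the simplex boundary, namely that $p_U(u)\to 0$ forces $p_{U|S}(u|s)\to 0$ for all $s$, so the objective remains well defined with the convention $0\log 0=0$; this is worth a sentence in your write-up, though the standard conventions in the joint-convexity lemma cover it.
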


\begin{proof}
Clearly the previous optimization is the same as \eqref{eq:minAvgI}. To prove the convexity of the objective function, note that
$h(x,a) = ax \log  x$
is convex for a fixed $a\geq 0$ and $x\geq 0$, and, therefore, the perspective of $g_1(x,z,a) = ax\log(x/z)$ is also convex  in $x$ and $z$ for $z>0,a\geq 0$ \cite{boyd_convex_2004}. Since the objective function \eqref{eq:obj_avg} can be written as
\begin{equation*}
 \sum_{u\in \calU}\sum_{ s \in \calS} g(p_{U|S}(u|s),p_U(u),p_S(s) ),
\end{equation*}
it follows the optimization is convex. In addition, since $p(u)\rightarrow 0 \Leftrightarrow p(u|s)\rightarrow 0 ~\forall u$,  the minimization is well defined over the probability simplex.
\end{proof}

\begin{remark}
Note that the previous optimization can also be solved using a dual minimization procedure analogous to the Arimoto-Blahut algorithm \cite{cover_elements_2006} by starting at a fixed marginal probability $p_U(u)$, solving a convex minimization at each step (with an added linear constraint compared to the original algorithm) and updating the marginal distribution. However, the above formulation allows the use of efficient algorithms for solving convex problems, such as interior-point methods. In fact, the previous minimization can be simplified to formulate the traditional rate-distortion problem as a single convex program, not requiring the use of the Arimoto-Blahut algorithm.
\end{remark}

\begin{remark}
The formulation in Theorem \ref{prop:avgInf} can be easily extended to the case when $U$ is determined directly from $S$, i.e. when Alice has access to $S$ and the privacy preserving mapping is given by $p_{U|S}(\cdot|\cdot)$ directly. For this, constraint \eqref{eq:constr_avg} should be substituted by
\begin{equation}
\label{eq:new_constr1}
 \sum_{y \in \calY} p_{Y|S}(y|s)p_{U|Y,S}(u|y,s) = p_{U|S}{(u|s)} ~\forall u,s,
\end{equation}
and the following linear constraint added
\begin{equation}
\label{eq:new_constr2}
 \sum_{s \in \calS} p_{S|Y}(s|y)p_{U|Y,S}(u|y,s) = p_{U|Y}{(u|y)} ~\forall u,y,
\end{equation}
with the minimization being performed over the variables $p_{U|Y,S}(u|y,s) ,p_{U|Y}(u|y)$ and  $p_{U|S}(u|s)$, with the usual simplex constraints on the probabilities.

\end{remark}

We now particularize the previous result for the case where $Y$ is a deterministic function of $S$.

\begin{cor}
\label{avg:determ}
If $Y$ is a deterministic function of $S$ and $S \rightarrow Y \rightarrow U$ then the minimization in \eqref{eq:minAvgI} can be simplified to a rate-distortion problem:
\begin{align}
  &\min_{p_{U|Y}}~I(Y;U)\\
  &\mbox{\normalfont s. t.~ } \mathbb{E}_{Y,U}[d(Y,U)]\leq D~.
\end{align}
Furthermore, by restricting $U=Y+Z$ and $d(Y,U)=d(Y-U)$, the optimization reduces to
\begin{align}
  &\max_{p_Z}~H(Z)\\
  &\mbox{s. t.~ } \mathbb{E}_{Z}[d(Z)]\leq \Delta~.
\end{align}
\end{cor}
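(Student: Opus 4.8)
The plan is to prove the two claims in sequence: first that the objective $I(S;U)$ collapses to $I(Y;U)$ under the stated hypotheses, and then that the additive-noise restriction turns the resulting rate--distortion problem into a maximum-entropy problem.

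\textbf{First reduction.} My plan for the identity $I(S;U)=I(Y;U)$ is to expand $I(S,Y;U)$ by the chain rule in two ways:
\[
I(S,Y;U)=I(S;U)+I(Y;U\mid S)=I(Y;U)+I(S;U\mid Y).
\]
The Markov chain $S\to Y\to U$ gives $I(S;U\mid Y)=0$, while $Y$ being a deterministic function of $S$ gives $H(Y\mid S)=0$ and hence $0\le I(Y;U\mid S)\le H(Y\mid S)=0$. Substituting both into the display yields $I(S;U)=I(Y;U)$. Since the distortion constraint $\mathbb{E}_{Y,U}[d(Y,U)]\le\Delta$ depends only on the joint law of $(Y,U)$, the feasible set is unchanged, so minimizing $I(S;U)$ and minimizing $I(Y;U)$ are literally the same optimization over $p_{U\mid Y}$. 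This step is routine.

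\textbf{Second reduction.} Next I would impose $U=Y+Z$ with $Z$ independent of $Y$, so that the channel $p_{U\mid Y}(u\mid y)=p_Z(u-y)$ is parametrized by $p_Z$ alone, together with a difference distortion $d(Y,U)=d(Y-U)$. Using independence, $H(U\mid Y)=H(Y+Z\mid Y)=H(Z)$, so that $I(Y;U)=H(U)-H(U\mid Y)=H(Y+Z)-H(Z)$, and the distortion collapses to $\mathbb{E}[d(Y-U)]=\mathbb{E}_Z[d(Z)]$, a functional of $p_Z$ only (absorbing the sign of $Y-U=-Z$ into the definition of $d$). The target form $\max_{p_Z}H(Z)$ will then follow \emph{provided} $H(U)=H(Y+Z)$ does not depend on the choice of $p_Z$.

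\textbf{The crux.} Pinning down $H(Y+Z)$ is the one genuinely load-bearing step, and it is where the structure of the alphabet must enter: I would invoke that addition is the operation of a finite abelian group on which $Y$ is uniform. Convolving the uniform law with any independent $p_Z$ leaves $U=Y+Z$ uniform, so $H(U)=H(Y)=\log|\calY|$ is constant across all mechanisms. With $H(U)$ fixed, $\min_{p_Z}I(Y;U)=\log|\calY|-\max_{p_Z}H(Z)$, giving exactly the claimed program $\max_{p_Z}H(Z)$ subject to $\mathbb{E}_Z[d(Z)]\le\Delta$. I would flag explicitly that this uniformity (group) hypothesis is essential and is left implicit in the statement: for non-uniform $Y$ the entropy $H(Y+Z)$ genuinely varies with $p_Z$ (ranging from $H(Y)$ at a point mass up to $\log|\calY|$ at uniform $Z$), and the reduction breaks. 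The residual problem is the classical maximum-entropy problem, whose optimizer is a Gibbs/exponential-family distribution matched to the distortion constraint, so nothing further is required beyond citing that standard fact.
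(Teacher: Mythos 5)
Your first reduction coincides with the paper's proof in substance: the paper expands $I(S;U)=I(S,Y;U)-I(Y;U|S)=I(Y;U)+I(S;U|Y)-I(Y;U|S)$ and cancels both conditional terms using $H(Y|S)=0$ and the Markov chain $S\rightarrow Y\rightarrow U$, which is exactly your chain-rule argument, so there is nothing to compare there.

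On the additive-noise step you and the paper genuinely part ways, and your skepticism at the ``crux'' is vindicated. The paper's entire justification is the single assertion that $H(Y|U)=H(Z)$: it decomposes $I(Y;U)=H(Y)-H(Y|U)$, notes $H(Y)$ is fixed, and identifies $H(Y|U)$ with $H(Z)$. But conditioned on $U$, knowing $Y$ is the same as knowing $Z$, so in fact $H(Y|U)=H(Z|U)\leq H(Z)$, with equality if and only if $Z$ is independent of $U$ --- precisely the crypto-lemma condition that your uniform-$Y$-on-a-finite-group hypothesis delivers (and which, notably, the paper's own counting-query example with binomial $Y$ does not satisfy). Your decomposition $I(Y;U)=H(Y+Z)-H(Z)$ exposes the same obstruction from the output side. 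Without the extra hypothesis the second claim fails as an exact equivalence: on $\mathbb{Z}_3$ with $p_Y=(0.9,\,0.1,\,0)$, $d(0)=0$, $d(1)=d(2)=1$ and $\Delta=0.1$, the feasible noise $p_Z=(0.9,\,0.075,\,0.025)$ yields $I(Y;U)\approx 0.184$ nats, while the entropy-maximizing feasible noise $p_Z=(0.9,\,0.05,\,0.05)$ yields $I(Y;U)\approx 0.192$ nats, so $\max_{p_Z}H(Z)$ does not solve $\min_{p_Z}I(Y;U)$. What the paper's route does give unconditionally is the Shannon-lower-bound inequality $I(Y;U)\geq H(Y)-H(Z)$, which is tight exactly under your condition. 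One small correction to your write-up: the mere fact that $H(Y+Z)$ varies with $p_Z$ for non-uniform $Y$ does not by itself break the equivalence of the two programs --- on $\mathbb{Z}_2$ with Hamming distortion the minimizers coincide for every prior, by the degradation ordering of binary symmetric channels --- so a concrete counterexample such as the ternary one above is what actually certifies the failure. With that caveat, your proof (first part unconditionally, second part under your stated hypothesis) is correct and strictly more careful than the paper's one-line argument.
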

\begin{proof}
Since $Y$ s a deterministic function of $S$ and  $S\rightarrow Y \rightarrow U$, then
\begin{align}
I(S;U)&=I(S,Y;U)-I(Y;U|S)\\
          &=I(Y;U)+I(S;U|Y)-I(Y;U|S)\\
          &=I(Y;U), \label{eq:reduction}
\end{align}
where \eqref{eq:reduction} follows from the fact that $Y$ is a deterministic
function of  $S$ ($I(Y;U|S)=0$) and  $S\rightarrow Y \rightarrow U$
($I(S;U|Y)=0$). For the additive noise case, the result follows by observing that $H(Y|U)=H(Z)$.
\end{proof}

\subsection{Maximum information leakage} 

The minimum over all possible  maximum cost gains of an adversary that uses a log-loss function in \eqref{eq:maxcostgen} is given by
\begin{equation*} 
C^* =  \max_{u\in \calU}H(S)-H(S|U=u).
\end{equation*}
The previous expression motivates the definition of \textit{maximum information leakage}, presented below.
\begin{defn}
The \textit{maximum information leakage} of a set of features $S$ is defined as the maximum cost gain, given in terms of the log-loss function, that an adversary obtains by observing a single output, and is given by $\max_{u\in \calU}H(S)-H(S|U=u)$. A privacy-preserving mapping $p_{U|Y}(\cdot)$ is said to achieve the \textit{minmax  information leakage} for a distortion constraint $\Delta$ if it is a solution of the minimization
\begin{align} 
  \min_{p_{U|Y}}\max_{u\in\mathcal{U}}~&  H(S)-H(S|U=u)
  \label{eq:innerOpt}\\
  \mbox{s. t. }& \mathbb{E}[d(U,Y)]\leq \Delta \label{eq:distConstrinnerOpt}
\end{align}
\end{defn}

The following theorem demonstrates  how the mapping that achieves the minmax information leakage can be determined as the solution of a related convex program that finds the minimum distortion given a constraint on the maximum information leakage.
\begin{thm}
\label{prop:minmaxleak}
Given $p_{S,Y}(\cdot,\cdot)$, a distortion function $d(\cdot,\cdot)$ and a constraint $\epsilon$ on the maximum information leakage, the minimum achievable distortion and the mapping that achieves the minmax  information leakage can be found by solving the following convex optimization (assuming the implicit simplex constraints on the probability distributions):
\begin{align} 
\min_{p_{U|Y},p_{U|S}} ~& \sum_{u\in\calU} \sum_{ s \in \calS} p_{U|Y}(u|y)p_Y(y)d(u,y)\\
\mbox{\normalfont s.t.~}&  \sum_{y \in \calY} p_{Y|S}(y|s)p_{U|Y}(u|y) = p_{U|S}{(u|s)} ~\forall u,s,\\
&  \sum_{s\in \calS} p_{U|S}(u|s)p_{S}(s) = p_{U}{(u)}~\forall u, \label{eq:constr_pu_minmax}
\end{align}
\vspace{-0.2in}
\begin{align} 
 \delta p_U(u)+ \sum_{s\in \calS} p_{U,S}(u,s)\log
 \frac{p_{U,S}(u,s)}{p_U(u)}&\leq 0~ \forall u,\label{eq:constr_convexMinMax}
\end{align}
where $\delta =  H(S)-\epsilon$. Therefore, for a given value of $\Delta$, the optimization problem in \eqref{eq:innerOpt} can  be efficiently solved with arbitrarily large precision by performing a line-search over  $\epsilon\in [0,H(S)]$ and solving the previous convex program at each step of the search.
\end{thm}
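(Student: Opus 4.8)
The plan is to establish the theorem in three movements: first that the displayed program is genuinely convex, second that its feasible region coincides exactly with the set of mappings whose maximum information leakage is at most $\epsilon$, and third that inverting the resulting distortion--leakage tradeoff curve via a line-search recovers the minmax optimum of \eqref{eq:innerOpt}.

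First I would verify convexity. The objective $\sum_{u,y} p_{U|Y}(u|y)p_Y(y)d(u,y)$ is linear in $p_{U|Y}$, and the first two constraints are linear equalities, so the only nontrivial point is \eqref{eq:constr_convexMinMax}. Writing $p_{U,S}(u,s)=p_{U|S}(u|s)p_S(s)$ and $p_U(u)=\sum_s p_{U,S}(u,s)$, both linear in the decision variable $p_{U|S}$, the left-hand side of \eqref{eq:constr_convexMinMax} is the linear term $\delta p_U(u)$ plus $\sum_s p_{U,S}(u,s)\log\frac{p_{U,S}(u,s)}{p_U(u)}$. Each summand has the form $x\log(x/z)$, the perspective of $t\mapsto t\log t$, which is jointly convex in $(x,z)$ for $x\geq 0,z>0$ --- exactly the function $g_1$ invoked in the proof of Theorem \ref{prop:avgInf}. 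Composing with the affine maps above and summing preserves convexity, so \eqref{eq:constr_convexMinMax} is convex for each $u$ and the whole program is convex.

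Second I would show \eqref{eq:constr_convexMinMax} encodes the leakage bound. Since $H(S|U=u)=-\sum_s \frac{p_{U,S}(u,s)}{p_U(u)}\log\frac{p_{U,S}(u,s)}{p_U(u)}$, multiplying the desired inequality $H(S)-H(S|U=u)\leq\epsilon$, i.e. $H(S|U=u)\geq\delta$, through by $p_U(u)\geq 0$ and rearranging yields precisely \eqref{eq:constr_convexMinMax}. Hence the feasible set, intersected over all $u$, is exactly the set of mappings with $\max_u [H(S)-H(S|U=u)]\leq\epsilon$, and the optimal value, call it $D(\epsilon)$, is the minimum distortion subject to maximum leakage at most $\epsilon$. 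Third I would relate $D(\epsilon)$ to the value $L(\Delta)$ of \eqref{eq:innerOpt}: a feasible point of the program at level $\epsilon$ with value $\leq\Delta$ is a mapping with distortion $\leq\Delta$ and leakage $\leq\epsilon$, so it certifies $L(\Delta)\leq\epsilon$; conversely a minmax-optimal mapping at distortion $\Delta$ has leakage exactly $L(\Delta)$ and distortion $\leq\Delta$, so it certifies $D(L(\Delta))\leq\Delta$. Together these give the inversion $L(\Delta)=\inf\{\epsilon\in[0,H(S)]:D(\epsilon)\leq\Delta\}$. Because decreasing $\delta$ (increasing $\epsilon$) only relaxes \eqref{eq:constr_convexMinMax}, $D(\epsilon)$ is non-increasing, so a bisection over $\epsilon\in[0,H(S)]$ that solves the convex program at each step converges monotonically to this infimum, and the corresponding minimizer is the sought minmax mapping.

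The main obstacle I anticipate is the third step rather than the convexity: one must argue that the infimum defining $L(\Delta)$ is attained and that $D(\epsilon)$ is continuous, not merely monotone, so that the line-search localizes $L(\Delta)$ to arbitrary precision. Both follow from compactness of the probability simplex together with convexity of the program, which guarantees existence of minimizers and continuity of the value function $D(\cdot)$ on the interior of its effective domain; the boundary cases $\epsilon=0$ (forcing $U\perp S$) and $\epsilon=H(S)$ (no constraint, $D=0$ via $U=Y$) fix the endpoints of the search interval.
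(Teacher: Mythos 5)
Your proposal is correct and follows essentially the same route as the paper's proof: reformulating \eqref{eq:innerOpt} as minimum distortion subject to the per-output constraint $H(S|U=u)\geq\delta$, verifying that this is exactly \eqref{eq:constr_convexMinMax} after multiplying through by $p_U(u)$, establishing convexity by the same perspective-function argument used for Theorem~\ref{prop:avgInf}, and invoking monotonicity of the optimal distortion in $\epsilon$ to justify the line search. Your third step is in fact more careful than the paper's one-sentence justification of the line search, as you make explicit the inversion relation between the two value functions and the attainment/continuity issues the paper leaves implicit.
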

\begin{proof}
The convex program in \eqref{eq:innerOpt} can be reformulated to return the minimum distortion for a given constraint $\epsilon$ on the minmax information leakage as
\begin{align} 
\min_{p_{U|Y}}~& \mathbb{E}[d(U,Y)]\\
\mbox{\normalfont s.t.~}&  H(S|U=u)\geq \delta~\label{eq:constr_minmaxPf}.
\end{align}
It is straightforward to verify that constraint \eqref{eq:constr_convexMinMax} can be written as \eqref{eq:constr_minmaxPf}.  Following the same steps as the proof of Theorem \ref{prop:avgInf} and noting that the function  $g_2(x,z,a) = ax\log(ax/z)$ is convex for $a,x\geq 0$, $z>0$, it follows that  \eqref{eq:constr_minmaxPf} and, consequently,  \eqref{eq:constr_convexMinMax}, is  a convex constraint. Finally, since the optimal distortion value in the previous program is a  decreasing function of $\epsilon$, it follows that the solution of \eqref{eq:innerOpt} can be found through a line-search in $\epsilon$.
\end{proof}

\begin{remark}
Analogously to the average information leakage case, the convex program presented in Theorem \eqref{prop:minmaxleak} can be extended  to the setting where the privacy preserving mapping is given by $p_{U|S}(\cdot|\cdot)$ directly. This can be done by substituting \eqref{eq:constr_pu_minmax} by \eqref{eq:new_constr1} and adding the linear constraint \eqref{eq:new_constr2}.
\end{remark}

Even though the convex program presented in Theorem \ref{prop:minmaxleak} holds in general, it does not provide much insight on the structure of the  privacy mapping that minimizes the maximum information leakage for a given distortion constraint. In order to shed light on the nature of the optimal solution, we present the following result for the particular case when $Y$ is a deterministic function of $S$ and $S\rightarrow Y \rightarrow U$.

\begin{cor}
For $Y=f(S)$, where $f:\calS\rightarrow \calY$ is a deterministic function,  $S\rightarrow Y \rightarrow U$ and a fixed prior $p_{Y,S}(\cdot,\cdot)$, the privacy preserving mapping that minimizes the maximum information leakage is given by
\begin{align}
p_{U|Y}^* = \arg  \min_{p_{U|Y}}~& \max_{u\in \calU} D(p_{Y|U}||\zeta)\\
 \mbox{\normalfont s.t.~}& \mathbb{E}[d(U,Y)]\leq \Delta, \nonumber
\end{align}
where
$\zeta(y) = \frac{ 2^{H(S|Y=y)}  }{ \sum_{y'\in \calY} 2^{H(S|Y=y')} } $.

\end{cor}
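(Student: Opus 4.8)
The plan is to reduce the inner objective $H(S)-H(S|U=u)$ to the claimed divergence by computing $H(S|U=u)$ explicitly, exploiting both that $Y=f(S)$ is deterministic and that $S\rightarrow Y\rightarrow U$ is a Markov chain. Since $H(S)$ is fixed by the prior and does not depend on the mapping, the whole reduction hinges on re-expressing $H(S|U=u)$ as a constant minus $D(p_{Y|U}(\cdot|u)||\zeta)$; once that identity is established, pulling the additive constants through the $\max_u$ and $\min_{p_{U|Y}}$ leaves the minimizer unchanged.

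First I would decompose the conditional entropy. Because $Y=f(S)$ is deterministic, knowing $S$ determines $Y$, so $H(S|U=u)=H(S,Y|U=u)=H(Y|U=u)+H(S|Y,U=u)$. The Markov chain $S\rightarrow Y\rightarrow U$ gives $p_{S|Y,U}(s|y,u)=p_{S|Y}(s|y)$, whence $H(S|Y,U=u)=\sum_{y}p_{Y|U}(y|u)H(S|Y=y)$. Writing $h(y)=H(S|Y=y)$ and combining with $H(Y|U=u)=-\sum_y p_{Y|U}(y|u)\log p_{Y|U}(y|u)$ yields
\[
H(S|U=u)=\sum_{y\in\calY}p_{Y|U}(y|u)\bigl[h(y)-\log p_{Y|U}(y|u)\bigr].
\]

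Next I would introduce the normalizing constant $Z=\sum_{y'\in\calY}2^{h(y')}$, so that $\zeta(y)=2^{h(y)}/Z$. Substituting $2^{h(y)}=Z\,\zeta(y)$ and grouping terms gives
\[
H(S|U=u)=\sum_{y\in\calY}p_{Y|U}(y|u)\log\frac{Z\,\zeta(y)}{p_{Y|U}(y|u)}=\log Z-D\bigl(p_{Y|U}(\cdot|u)\,||\,\zeta\bigr).
\]
Therefore $H(S)-H(S|U=u)=H(S)-\log Z+D(p_{Y|U}(\cdot|u)||\zeta)$, and since $H(S)$ and $\log Z$ are both constants independent of $p_{U|Y}$ and of $u$, we obtain $\max_{u}[H(S)-H(S|U=u)]=H(S)-\log Z+\max_{u}D(p_{Y|U}(\cdot|u)||\zeta)$, so that $\min_{p_{U|Y}}\max_{u}[H(S)-H(S|U=u)]$ has the same argmin as $\min_{p_{U|Y}}\max_{u}D(p_{Y|U}(\cdot|u)||\zeta)$ under the same distortion constraint, which is exactly the claim.

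The only delicate step is the entropy decomposition: one must invoke the deterministic relation to justify $H(S|U=u)=H(S,Y|U=u)$ and then the Markov property to replace $H(S|Y,U=u)$ by the \emph{prior} conditional entropies $H(S|Y=y)$, which are what make $\zeta$ a fixed distribution. After these two identities, the recognition of the KL divergence and the extraction of the constants $H(S)$ and $\log Z$ are routine algebra, so I expect no genuine obstacle beyond carefully tracking which quantities depend on the mapping.
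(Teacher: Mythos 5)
Your proposal is correct and takes essentially the same approach as the paper: both reduce the inner objective via the identity $H(S|U=u)=H(Y|U=u)+\sum_{y}p_{Y|U}(y|u)H(S|Y=y)=\log Z-D\bigl(p_{Y|U}(\cdot|u)\,\|\,\zeta\bigr)$ and then pull the constants $H(S)$ and $\log Z$ through the min-max. The only (cosmetic) difference is that you obtain the intermediate identity cleanly from the entropy chain rule plus the Markov property, whereas the paper derives it by expanding $p_{S|U}(s|u)=\sum_{y}p_{S|Y}(s|y)p_{Y|U}(y|u)$ and collapsing the double sum using $p_{S|Y}(s|y)=0$ for $y\neq f(s)$.
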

\begin{proof}
Under the assumptions of the corollary, note that for a given $u\in \calU$ (and assuming that the logarithms are in base 2)
\begin{align} 
 &H(S|U=u) = \nonumber \\
&   -\sum_{s \in\calS} p_{S|U}(s|u)\log
 p_{S|U}(s|u) \nonumber\\
 &= -\sum_{s \in\calS}\left(\sum_{y \in\mathcal{Y}}
 p_{S|Y}(s|y)p_{Y|U}(y|u)\right) \nonumber\\
& \times \left( \log \sum_{y'\in \mathcal{Y}}p_{S|Y}(s|y') p_{Y|U}(y'|u) \right) \nonumber\\
&= -\sum_{s \in\calS}
 p_{S|Y}(s|f(s))p_{Y|U}(f(s)|u) \nonumber \\
 & ~ ~\times \log p_{S|Y}(s|f(s)) p_{Y|U}(f(s)|u) \label{eq:weirdstep1}\\
&= -\sum_{s \in\calS,y\in \calY}
p_{S|Y}(s|y)p_{Y|U}(y|u) \log p_{S|Y}(s|y) p_{Y|U}(y|u) \label{eq:weirdstep2} \\
& = H(Y|U=u)+\sum_{y \in\mathcal{Y}}
 p_{Y|U}(y|u)H(S|Y=y)\\
& = \sum_{y \in\mathcal{Y}} p_{Y|U}(y|u)\log
 \frac{2^{H(S|Y=y)}}{p_{Y|U}(y|u)} \\
 & = - D(p_{Y|U}||\zeta) + \log\left({\sum_{y\in \calY} 2^{H(S|Y=y)} }\right)
 \label{eq:zeta_rock},
 \end{align}
 where \eqref{eq:weirdstep1} and \eqref{eq:weirdstep2} follows by noting that
$p_{S|Y}(s|y)=0$ if $y\neq f(s)$. The
result follows directly by substituting \eqref{eq:zeta_rock} in
\eqref{eq:innerOpt}.
\end{proof}

For $Y$ a deterministic function of $S$, the optimal privacy preserving mechanism is the one that approximates  (in terms of KL-divergence)  the posterior distribution of $Y$ given $U$ to $\zeta(\cdot)$. Note that the distribution $\zeta(\cdot)$ captures the inherent uncertainty that exists in the function $f$ for different outputs $y\in \calY$. The purpose of the privacy preserving mapping is then to augment this uncertainty, while still satisfying the distortion constraint.  In particular, the larger the uncertainty $H(S|Y=y)$, the larger the probability of $p_{Y|U}(y|u)$ for all $u$. Consequently, the optimal privacy mapping (exponentially) reinforces the posterior probability of the values of $y$ for which there is a large uncertainty regarding the features $S$. This fact is illustrated in the next example, where we revisit  the counting query  presented in Example \ref{examp:query1}.

\begin{example}[Counting query continued]
\label{examp:query2}
Assume that each database input $S_i$, $1\leq i \leq n$ satisfies
$\Pr(\mathbbm{1}_A(S_i)=1)=p$ and are independent and identically distributed.
Then $Y$ is a binomial random variable with parameter $(n,p)$. It follows that
$H(\bS|Y=y) = \log \binom{n}{y}$. Consequently, the optimal privacy preserving
mapping will be the one that results in a posterior probability $p_{Y|U}(y|u)$
that is proportional to the size of the pre-image of $y$, i.e. $p_{Y|U}(y|u)
\propto |f^{-1}(y)|=\binom{n}{y}$.
\end{example}

\section{Comparison of privacy metrics}
\label{sec:compare}

We now compare average information leakage and maximum information leakage with differential privacy and \textit{information privacy}, the latter being a new metric introduced in this section. We first recall the definition of differential privacy, presenting it  in terms of the model discussed in Section \ref{sec:genSet} and assuming that the set of features $\mathbf{S}$ is a vector given by $\mathbf{S}=(S_1,\dots,S_n)$, where $S_i \in \calS$.

\begin{defn}[\cite{dwork_differential_2006}]
  A privacy preserving mapping $p_{U|\bS}(\cdot|\cdot)$ provides $\epsilon$-differential privacy if
  for all inputs $\bs_1$ and $\bs_2$ differing in at most one entry and all
   $B\subseteq \calU$,
   \begin{equation}
     \label{eq:diffPrivacy}
     \PR (U\in B|\bS=\bs_1)\leq \exp(\epsilon)\times
     \PR(U\in B|\bS=\bs_2)~.
   \end{equation}
\end{defn}

An alternative (and much stronger) definition of privacy, related to the one
presented in \cite{evfimievski_limiting_2003} is given below. We note that this definition is unwieldy, but explicitly captures the ultimate goal in privacy: the posterior and prior probabilities of the features $S$ do not change significantly given the output.

\begin{defn} A privacy preserving mapping $p_{U|\bS}(\cdot|\cdot)$  provides
  $\epsilon$-\textit{information privacy} if for all $\bs\subseteq \calS^n$:
  \begin{equation}
  \label{eq:infPrivacy}
    \exp(-\epsilon)\leq \frac{p_{\bS|U}(\bs|u)}{p_{\bS}(\bs)} \leq \exp(\epsilon)~\forall u\in \mathcal{U} : p_U(u)> 0.
   \end{equation}
 \end{defn}
Note that $\epsilon$-information privacy implies directly $2\epsilon$-differential
privacy and maximum information leakage of at most $\epsilon/\ln 2$ bits, as shown below.

\begin{thm}
  If a privacy preserving mapping  $p_{U|\bS}(\cdot|\cdot)$ is $\epsilon$-information private for
  some input distribution such that $\mbox{supp}(p_U)=\mathcal{\calU}$ , then
  it is at least $2\epsilon$-differentially private and leaks at most $\epsilon/\ln 2$ bits on average.
\end{thm}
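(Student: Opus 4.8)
The plan is to first rewrite the information privacy condition in a more convenient likelihood-ratio form, and then dispatch the two claims separately. By Bayes' rule the ratio $p_{\bS|U}(\bs|u)/p_{\bS}(\bs)$ equals $p_{U|\bS}(u|\bs)/p_U(u)$, which is well-defined precisely because the support hypothesis $\mbox{supp}(p_U)=\calU$ guarantees $p_U(u)>0$ for every $u\in\calU$. Hence $\epsilon$-information privacy is equivalent to the two-sided bound $\exp(-\epsilon)\leq p_{U|\bS}(u|\bs)/p_U(u)\leq \exp(\epsilon)$ holding for all $\bs$ and all $u$. This reformulation is the hinge of the whole argument.

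For the differential privacy claim, I would take two inputs $\bs_1,\bs_2$ differing in at most one entry and bound the likelihood ratio pointwise. Applying the upper bound to $\bs_1$ and the lower bound to $\bs_2$ yields $p_{U|\bS}(u|\bs_1)\leq \exp(\epsilon)\,p_U(u)$ and $p_{U|\bS}(u|\bs_2)\geq \exp(-\epsilon)\,p_U(u)$, so their ratio is at most $\exp(2\epsilon)$ for every $u$. Summing this pointwise inequality over any $B\subseteq\calU$ gives $\PR(U\in B|\bS=\bs_1)\leq \exp(2\epsilon)\,\PR(U\in B|\bS=\bs_2)$, which is exactly $2\epsilon$-differential privacy. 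The factor of two arises precisely from pairing the one-sided bounds for the two \emph{different} inputs.

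For the average leakage claim, I would recall from the earlier development that the average information leakage equals $I(\bS;U)=\mathbb{E}_U[D(p_{\bS|U}||p_{\bS})]$. Writing this as $\sum_u p_U(u)\sum_{\bs} p_{\bS|U}(\bs|u)\log\!\big(p_{\bS|U}(\bs|u)/p_{\bS}(\bs)\big)$ and inserting the upper bound $p_{\bS|U}(\bs|u)/p_{\bS}(\bs)\leq\exp(\epsilon)$ inside the logarithm, each log term is at most $\log\exp(\epsilon)=\epsilon/\ln 2$ bits. Since $\sum_{\bs}p_{\bS|U}(\bs|u)=1$ and $\sum_u p_U(u)=1$, the weighted sum collapses to at most $\epsilon/\ln 2$, giving the stated bound.

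The only real subtlety, and the step I would be most careful about, is the Bayes'-rule translation together with the role of the support assumption: without $p_U(u)>0$ everywhere, the ratio defining information privacy and the likelihood-ratio reformulation are not simultaneously well-defined, and the conversion between the two forms would break down. Everything else reduces to one-sided bounding followed by summation, so I do not anticipate any technical obstacle beyond keeping the logarithm base consistent (base two, so that $\log e=1/\ln 2$).
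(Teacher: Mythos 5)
Your proof is correct and takes essentially the same route as the paper's: Bayes' rule turns the information-privacy condition into a likelihood-ratio bound whose two one-sided halves, applied to the neighboring inputs $\bs_1$ and $\bs_2$, yield the factor $\exp(2\epsilon)$, and the leakage claim follows by upper-bounding the log-ratio inside $I(\bS;U)=\mathbb{E}_U[D(p_{\bS|U}||p_{\bS})]$ by $\epsilon/\ln 2$ (base-2 logarithm). If anything, your pointwise-in-$u$ bound followed by summation over $B$ is slightly tidier than the paper's direct manipulation of $\PR(\bS=\bs\,|\,U\in B)$, which implicitly extends the pointwise information-privacy inequality from single outputs $u$ to events $B$.
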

\begin{proof}
Note that for a given $B\subseteq \calU$
\begin{align*}
  \frac{\PR(U \in B|\bS=\bs_1)}{\PR(U \in B|\bS=\bs_2)}&=
  \frac{\PR(\bS=\bs_1|U \in B)\PR(\bS=\bs_2)}{\PR(\bS=\bs_2|U \in B)\PR(\bS=\bs_1)}\\
  &\leq \exp(2\epsilon),
\end{align*}
where the last step follows from \eqref{eq:diffPrivacy}. Clearly if $\bs_1$ and
$\bs_2$ are neighboring vectors (i.e. differ by only one entry), then $2\epsilon$-differential privacy is
satisfied. Furthermore
\begin{align*}
H(\bS)-H(\bS|U=u)& = \sum_{\bs\in \calS^n} p_{\bS|U}(\bs|u)p_U(u)\log \frac{p_{\bS|U}(\bs|u)}{p_\bS(\bs)}\\
&\leq \sum_{\bs\in \calS^n, u\in \calU}
p_{\bS|U}(\bs|u)p_U(u)\frac{\epsilon}{\ln 2}\\
&= \frac{\epsilon}{\ln 2}
\end{align*}\end{proof}
\vspace{-0.05in}

%
%

We show in the next theorem that differential privacy \textit{does not
guarantee} privacy in terms of average information leakage \textit{in general}
and, consequently in terms of maximum information leakage and information
privacy. More specifically, guaranteeing that a mechanism is
$\epsilon$-differentially private \textit{does not} provide \textit{any}
guarantee on the information leakage.
\begin{thm}
\label{prop:diffPrivSucks}
  For every $\epsilon >0$ and $\delta\geq 0$, there exists an $n\in \mathbb{Z}_{+}$, sets $\calS^n$ and $\calU$, a prior $p_\bS(\cdot)$ over $\calS^n$ and a privacy mapping $p_{U|S}(\cdot|\cdot)$ that is $\epsilon$-differentially
  private but leaks at least $\delta$ bits on average.
\end{thm}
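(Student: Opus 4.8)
The plan is to construct an explicit counterexample: a prior distribution and a privacy mapping that satisfy $\epsilon$-differential privacy yet leak an arbitrarily large amount of average information $I(\bS;U)$. The key tension to exploit is that differential privacy only constrains \emph{ratios} of conditional probabilities $\PR(U\in B|\bS=\bs_1)/\PR(U\in B|\bS=\bs_2)$ between neighboring inputs, and says nothing about the prior $p_\bS$. Since average information leakage $I(\bS;U)=\mathbb{E}_U[D(p_{\bS|U}\|p_\bS)]$ depends heavily on the prior, the strategy is to pick a prior that is extremely concentrated on a small subset of $\calS^n$, so that the mapping can be differentially private (a mild ratio constraint between \emph{all} neighbors) while still being enormously informative about which of the high-probability inputs occurred.

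First I would fix $\epsilon>0$ and $\delta\geq 0$ and design a concrete instance. A natural construction is to take a mapping that is close to the identity but smoothed just enough to meet the ratio bound: for instance, let $U$ reveal $\bS$ with high probability but assign a small ``leakage floor'' to every other output so that no conditional probability is ever zero and the ratio in \eqref{eq:diffPrivacy} stays below $\exp(\epsilon)$. Concretely, one can let each input map to its ``true'' output with probability close to $1$ and spread the remaining mass uniformly; by choosing the alphabet sizes and the residual mass appropriately as functions of $n$ and $\epsilon$, the differential privacy constraint over neighboring pairs can be satisfied. The second ingredient is to make the prior place almost all its mass on a single input (or a vanishingly small set of inputs) relative to the uniform background. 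Because the mutual information can be driven by $H(\bS)$, taking $\calS^n$ large and the prior nearly deterministic lets $H(\bS|U)$ stay small while $H(\bS)$ is sizeable, so that $I(\bS;U)=H(\bS)-H(\bS|U)$ grows without bound as $n\to\infty$.

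I would then verify the two required properties directly. For differential privacy, I would check \eqref{eq:diffPrivacy} by bounding the worst-case ratio over all neighboring $\bs_1,\bs_2$ and all measurable $B$; the smoothing floor guarantees this ratio is at most $\exp(\epsilon)$ uniformly. For the leakage lower bound, I would compute or lower-bound $I(\bS;U)=\mathbb{E}_U[D(p_{\bS|U}\|p_\bS)]$ and exhibit that it exceeds $\delta$ for a suitable choice of $n$; intuitively, observing $U$ collapses the posterior onto a handful of inputs, producing a large divergence from the spread-out prior. The residual claims about maximum information leakage and information privacy then follow for free, since average leakage lower-bounds maximum leakage and any failure of average-leakage privacy precludes information privacy.

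The main obstacle I expect is the simultaneous satisfaction of both constraints with the \emph{right} quantitative scaling: the differential privacy bound forces every conditional-probability ratio between neighbors to be small, which caps how ``revealing'' the mapping can be locally, while the leakage target demands that $U$ be highly informative globally. The resolution hinges on the observation that differential privacy is a \emph{neighbor-to-neighbor} constraint whereas leakage is a \emph{prior-weighted global} quantity; by engineering the prior to be heavily skewed, a mapping that changes only gently between neighbors can still sharply distinguish the few high-probability inputs from the vast low-probability background. Getting the dependence of alphabet size, smoothing floor, and prior concentration on $(n,\epsilon,\delta)$ to line up so that both inequalities hold is the delicate bookkeeping step, but it is routine once the construction is fixed.
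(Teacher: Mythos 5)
Your high-level diagnosis is correct --- differential privacy constrains only neighbor-to-neighbor likelihood ratios while $I(\bS;U)$ is a prior-weighted global quantity --- but the concrete construction you sketch fails, and the gap is exactly the ingredient the paper's proof supplies. Your mechanism (``true'' output with probability close to $1$, uniform smoothing floor elsewhere) cannot be $\epsilon$-differentially private: if any neighboring pair $\bs_1,\bs_2$ has distinct true outputs (and some pair must, unless the output is constant, since the neighbor graph on $\calS^n$ is connected), the ratio at the true output of $\bs_1$ is $(1-\eta)(|\calU|-1)/\eta$ where $\eta$ is the floor mass, and forcing this below $e^\epsilon$ drives the revealing probability down to roughly $e^{\epsilon}/|\calU|$, i.e.\ the mechanism becomes nearly uniform. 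Worse, because the floor is flat, once the neighbor constraint holds the same ratio bound $e^\epsilon$ holds between \emph{all} pairs of inputs, not just neighbors; your mechanism is then essentially $\epsilon$-information private, and by the paper's own comparison theorem it leaks at most $\epsilon/\ln 2$ bits on average --- it can never reach $\delta$. Your prior prescription also works against you: a ``nearly deterministic'' prior makes $H(\bS)$ small, and $I(\bS;U)\leq H(\bS)$ caps the leakage; moreover, your closing aim of distinguishing the high-probability inputs ``from the vast low-probability background'' is the wrong task, since the prior already resolves that distinction at zero information cost --- $U$ must distinguish the high-probability inputs \emph{from each other}.

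The missing idea is twofold: (i) the mechanism's likelihood ratio must degrade gracefully with Hamming distance, so inputs at distance $k$ are distinguishable with ratio up to $e^{k\epsilon}$ even though neighbors are not (group-privacy degradation) --- a flat floor has no such structure, but Laplace noise on a sensitivity-one query does; and (ii) the prior must put its mass on many states that are pairwise \emph{far apart} in the neighbor metric. The paper implements exactly this: it takes the counting query with $U=Y+N$, $N\sim\mbox{Lap}(1/\epsilon)$ (which is $\epsilon$-differentially private for every prior), and chooses $p_{\bS}$ so that $Y$ is uniform over the $1+n/k$ multiples of $k$. MAP decoding of $Y$ from $U$ succeeds with probability $1-e^{-k\epsilon/2}$, a Fano-type argument with the error indicator gives $I(Y;U)\geq \left(1-e^{-k\epsilon/2}\right)\log\left(1+n/k\right)-1$, which exceeds $\delta$ for suitable $k,n$, and finally $I(\bS;U)=I(Y;U)$ because $Y$ is a deterministic function of $\bS$ and $S\rightarrow Y\rightarrow U$. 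The ``delicate bookkeeping'' you defer is not routine under your construction --- it is impossible --- whereas it becomes immediate once the support points are spaced $k\gg 1/\epsilon$ apart.
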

\begin{proof}
We prove the statement by explicitly constructing an example that is
$\epsilon$-differentially private, but an arbitrarily large amount of
information can leak on average from the system. For this, we return to the
counting query  discussed in examples \ref{examp:query1} and  \ref{examp:query2}
with, the sets $\mathcal{S}$ and $\calY$  being defined accordingly, and letting
$\calU=\calY$. We do not
assume independence of the inputs.

For the counting query and for any given prior, adding Laplacian noise to the output provides $\epsilon$-differential privacy \cite{dwork_differential_2006}. More precisely, for the output of the query given in \eqref{eq:count_query}, denoted  as $Y\sim p_Y(y),0\leq y \leq n$, the mapping
\begin{equation}
  U=Y+N,~~N\sim \mbox{Lap}(1/\epsilon), \label{eq:lap_priv}
\end{equation}
where the pdf of the additive noise $N$ given by
\begin{equation}
  p_N(r;\epsilon)=\frac{\epsilon}{2}\exp(-|r|\epsilon),
\end{equation}
is $\epsilon$-differentially private.
Now assume that $\epsilon$ is given, and denote $\bS = (X_1,\dots,X_n)$. Set $k$
and $n$ such that $n~\mod k = 0$, and let $p_{\bS}(\cdot)$ be such that
\begin{equation}
p_Y(y) = \left\{
     \begin{array}{ll}
       \frac{1}{1+n/k} & \mbox{if~} y\mod k = 0,\\
       0 & \mbox{otherwise.}
     \end{array} \right.
\end{equation}
 With the goal of lower-bounding the information leakage, assume that Bob, after
 observing $U$, maps it to the nearest value of $y$ such that $p_Y(y)>0$, i.e.
 does a maximum a posteriori estimation of $Y$.  The probability that Bob makes
 a correct estimation (and neglecting edge effects), denoted by
 $\alpha_{k,n}(\epsilon)$, is given by:
\begin{equation}
 \alpha_{k,n}(\epsilon) = \int_{\frac{-k}{2}}^{\frac{k}{2}} \frac{\epsilon}{2}\exp(-|x|\epsilon) dx=
  1-\exp\left(-\frac{k\epsilon}{2} \right).
\end{equation}
Let $E$ be a binary random variable that indicates the event that Bobs makes a wrong estimation of $Y$ given $U$. Then
\vspace{-0.05in}
\begin{align*} 
I(Y;U)&\geq I(E,Y;U)-1\\
&\geq I(Y;U|E)-1\\
&\geq \PR\{E=0\}I(Y;U|E=0)-1\\
&= \left(1-e^{-\frac{k\epsilon}{2} }\right)\log\left(1+\frac{n}{k}\right)-1,
\end{align*}
which can be made arbitrarily larger than $\delta$ by appropriately choosing the
values of $n$ and $k$. Since $Y$ is a deterministic function of $\bS$, $I(Y;U) =
I(\bS;U)$, as shown in the proof of Corollary \ref{avg:determ}, and the result
follows.  
\end{proof}

The counterexample used in the proof of the previous theorem can
  be extended to allow the adversary to recover \textit{exactly} the inputs
  generated the ouput $U$. This can be done by assuming that the inputs are ordered and
  correlated in such a way that $Y=y$ if and only if $S_1=1,\dots,S_y =1$. 
  In this case, for $n$ and $k$ sufficiently large,
  the adversary can exploit the input correlation to correctly learn the values of
  $S_1,\dots,S_n$ with arbitrarily high probability.

Differential privacy does not necessarily guarantee low leakage of information
-- in fact, an arbitrarily large amount of information can be leaking from a
differentially private system, as shown in Theorem \ref{prop:diffPrivSucks}.
This is a serious issue when using solely the differential privacy definition as
a privacy metric. In addition, it follows as a simple extension of \cite[Prop.
4.3] {McGregor-ECCC2011} that $I(S;U)\leq O(\epsilon n)$, corroborating that
differential privacy does not bound above the average information leakage when
$n$ is sufficiently large.


Nevertheless, differential privacy does have an operational advantage since it
does not require any prior information. However, by neglecting the prior and
requiring differential privacy, the resulting  mapping might not be \textit{de
facto} private, being  suboptimal under the information leakage measure. We note
that the presented formulations can be made prior independent 
maximizing the minimum information leakage over a set of possible priors. This
problem is closely related to universal coding \cite{cover_elements_2006}.
\vspace{-0.04in}

\section{Conclusions}\label{sec:Conclusion}
In this paper we presented a general statistical inference framework to capture the privacy threat incurred by a user that releases data to a passive but curious adversary given utility constraints. We demonstrated how under certain assumptions this framework naturally leads to an information-theoretic approach to privacy. The design problem of finding privacy-preserving mappings for minimizing the information leakage from a user's data with utility constraints was formulated as a convex program. This approach can lead to practical and deployable privacy-preserving mechanisms. Finally, we compared our approach with differential privacy, and showed that the differential privacy requirement does not necessarily constrain the information leakage from a data set.

\vspace{-0.05in}

\bibliographystyle{IEEEtran}
\bibliography{IEEEabrv,references}

\end{document}